\journal{Statistics and Probability Letters}
\newcommand{\ve}[1]{\bm{{#1}}}
\newcommand{\vesub}[2]{\bm{{#1}}_{#2}}
\newcommand{\vesup}[2]{\bm{{#1}}^{#2}}
\newcommand{\vess}[3]{\bm{{#1}}_{#2}^{#3}}
\newcommand{\hvesub}[2]{\hat{\ve{#1}}_{#2}}
\newcommand{\tvesub}[2]{\tilde{\ve{#1}}_{#2}}
\newcommand{\tvess}[3]{\tilde{\ve{#1}}_{#2}^{#3}}
\numberwithin{equation}{section}
\begin{document}
	
\begin{frontmatter}
\title{A robust estimation for the extended t-process regression model}
		
\author[a]{Zhanfeng Wang\corref{cor1}}
\author[a]{Kai Li}
\author[b]{Jian Qing Shi}
\address[a]{Department of Statistics and Finance, Management School, University of Science and Technology of China, Hefei, China.}
\address[b]{School of Mathematics and Statistics, Newcastle University, Newcastle, UK.}
\cortext[cor1]{Corresponding author. Email: zfw@ustc.edu.cn.}

\begin{abstract}
	
   Robust estimation and variable selection procedure are developed for the extended t-process regression model with functional data.
	Statistical properties such as consistency of estimators  and predictions are obtained. Numerical studies show that the proposed method performs well.
	
\end{abstract}

\begin{keyword}
	Functional data  \sep Maximum a posterior  \sep Spike and slab priors \sep Information consistency
\end{keyword}
\end{frontmatter}


\section{Introduction}

For functional response variable and functional covariates, this paper considers a concurrent functional regression model
\begin{equation}\label{conmodel}
y_{i}(t_{ij})=f_{i}(\vesub{x}{i}(t_{ij}))+\varepsilon_{i}(t_{ij}), j=1,...,n_i,~i=1,...,m,
\end{equation}	
where $t_{ij}$ is a grid point  which could be temporal or spatial, $f_{i}(\cdot)$ is an unknown function, $\vesub{x}{i}(\cdot)$ is a vector of observed covariates with dimension $p$ and $\varepsilon_{i}(\cdot)$ is an error function.  Hereafter, let $y_{ij}=y_i(t_{ij})$, $\vesub{x}{ij}=\vesub{x}{i}(t_{ij})$ and $\varepsilon_{ij}=\varepsilon_{i}(t_{ij})$.
To estimate $f_{i}(\cdot)$, a process regression model is defined as follows. 
\begin{equation}\label{prmodel}
y_{i}(\ve{x})=f_{i}(\ve{x})+\varepsilon_{i}(\ve{x}), i=1,...,m,
\end{equation}	
where $f_{i}(\cdot)$ and $\varepsilon_{i}(\cdot)$ are assumed to have some stochastic process priors.

Model (\ref{prmodel}) becomes the popular Gaussian process regression (GPR) model, when $\varepsilon_{i}(\cdot)$  and prior for $f_{i}(\cdot)$ have independent Gaussian processes. GPR models are well studied in literature, details can refer to \cite{Rasmussen2006Gaussian}, \cite{Shi2011Gaussian} and therein references.
However, GPR does not give a robust estimation against outliers in the response space.  \cite{Wang2017extend} used an extended t-process (ETP) to build a robust functional regression model, called the extended t-process regression (eTPR) model, where they assumed that $f_{i}(\cdot)$ and $\varepsilon_{i}(\cdot)$ in model (\ref{prmodel}) have a joint extended t-process.
The eTPR model  inherits some nice features from GPR, e.g. the prediction has an analytical form and thus it can be implemented  efficiently; but it also encountered some undesirable problems, for example, the degree of freedom (DF)
involved in ETP is usually over-estimated if a likelihood method is used \citep[see the discussion in e.g.][]{Fernandez1999multi, Lange1993normal, Liu1994statistical}.
Size of DF is crucial to determine the robustness of the model. The eTPR with a smaller DF tends to be more robust against outliers. 
Thence, when the MLE tends to be large, the eTPR model loses robustness.
Actually, 
\cite{Wang2017extend}  stated that eTPR becomes GPR when the degree of freedom tends to  infinity.

This paper proposes a Bayesian approach to estimate the DF.
The proposed procedure has several advantages:
(a)
The DF is estimated via maximum a posterior (MAP) with some suitable priors, resulting in a better and stable estimation. such that estimation approach still has robustness. 
(b) A variable selection procedure is constructed,  by using the spike and slab priors, to parameters involved in covariance functions. This can simplify the covariance structure and improve the accuracy. 
Statistical properties, such as consistency of the MAP  and the 
information consistency of the predicted function, are also investigated.
Numerical studies including simulation results and real examples are presented to show the performance of the proposed method. 

The rest of this paper is organized as follows. In Section 2, we introduce the eTPR model and Bayesian estimation method, including the Bayesian inference and variable selection procedure. It also presents statistical properties. Numerical studies are given in Section 3. A few concluding remarks are given in  Section 4. All proofs  are listed in Supplementary Material.

\section{Methodology}
\subsection{eTPR model}

A random function $f$ is said to follow an ETP, $%
f\sim ETP(\nu ,\omega ,h,k),$  if for any collection of points $%
\mbox{\boldmath
${X}$}=(\mbox{{\boldmath${x}$}$_{1}$},...,\mbox{{\boldmath${x}$}$_{n}$}%
)^{T},\mbox{{\boldmath${x}$}$_{i}$}\in {\mathcal{X}} \subset R^p$, 
$
\mbox{{\boldmath${f}$}$_{n}$}=f(\mbox{{\boldmath ${X}$}$$})=(f(%
\mbox{{\boldmath${x}$}$_{1}$}),...,f(\mbox {{\boldmath${x}$}$_{n}$}%
))^{T}$ has an extended multivariate $t$ distribution (EMTD), \\$EMTD(\nu ,\omega ,\mbox{{\boldmath
${h}$}$_{n}$},\mbox{{\boldmath${K}$}$_{n}$}),
$
where the density function is
\begin{equation*}
p(z)=|2\pi \omega \mbox{{\boldmath ${K}$}$_{n}$}|^{-1/2}\frac{\Gamma
(n/2+\nu )}{\Gamma (\nu )}\left( 1+\frac{(z-\mbox{{\boldmath ${h}$}$_{n}$}%
)^{T}\mbox{{\boldmath ${K}$}$_{n}^{-1}$}(z-\mbox{{\boldmath ${h}$}$_{n}$})}{%
2\omega }\right) ^{-(n/2+\nu )},
\end{equation*}
$\mbox{{\boldmath${h}$}$_{n}$}=(h(\mbox{{\boldmath${x}$}$_{1}$}),...,h(%
\mbox{{\boldmath${x}$}$_{n}$}))^{T}$, $\mbox{{\boldmath
${K}$}$_{n}$}=(k_{ij})_{n\times n}$ with $k_{ij}=k(%
\mbox{{\boldmath
${x}$}$_{i}$},\mbox{{\boldmath${x}$}$_{j}$})$ for some mean function $%
h(\cdot ): {\mathcal{X}}\rightarrow R$ and kernel function $k(\cdot ,\cdot ):  {\mathcal{X}}\times
 {\mathcal{X}}\rightarrow R.$

Following \cite{Wang2017extend}, we assume that for model (\ref{prmodel}), $f_{i}$ and $\varepsilon_{i}$ have a joint extended t-process,
\begin{equation}\label{etpmodel}
\left( \begin{array}{c}
f_{i} \\
\varepsilon_{i}
\end{array}
\right) \sim ETP \left( \nu,\omega,\left( \begin{array}{c}
h_{i} \\
0
\end{array}
\right),\left( \begin{array}{cc}
k_{i} & 0 \\
0 & \widetilde{k}
\end{array}
\right) \right),
\end{equation}
where $h_{i}$ and $k_{i}$ are respectively mean and kernel functions, and $\widetilde{k}(\ve{u},\ve{v})=\sigma^{2} I(\ve{u}=\ve{v})$.
Let observed data set $\vesub{D}{n}=\{ (\vesub{X}{i},\vesub{y}{i}): i=1,\dots,m \}$, where $\vesub{y}{i}=(y_{i1},\dots,y_{in_{i}})^{\top}$ are the observed responses and
$\vesub{X}{i}=(\vesub{x}{i1},\dots,\vesub{x}{in_{i}})^{\top}$, $\vesub{x}{ij} \in \emph{\textbf{R}}^{p}$, are observed covariates.
Without loss of generality, let $n_{1}=\cdot\cdot\cdot=n_{m}=n$, and $h_i(\cdot)=0.$  It shows that model (\ref{etpmodel}) can be rewritten hierarchically as
\begin{equation}
\vesub{y}{i}|\vesub{f}{i},r_{i} \stackrel{ind}{\sim} N(\vesub{f}{i},r_{i}\sigma^{2}\vesub{I}{n}), ~~
\vesub{f}{i}|r_{i} \stackrel{ind}{\sim} N(\vesub{h}{in},r_{i}\vesub{K}{in}), ~~r_{i} \stackrel{ind}{\sim} IG(\nu,\omega), \label{EMTD}
\end{equation}
where $\vesub{f}{i}=f_{i}(\vesub{X}{i})$,  $\vesub{h}{in}=h_{i}(\vesub{X}{i})$, $\vesub{K}{in}=(k_{ijl})_{n \times n}$ with $k_{ijl}=k_{i}(\vesub{x}{ij},\vesub{x}{il})$, and
$N$ and $IG$ stand for a normal distribution and an inverse gamma distribution respectively.
From \cite{Wang2017extend}, we set $\omega=\nu-1$. The parameter $\nu$ can be treat as degree of freedom for the eTPR model.

\subsection{Estimation procedure}

To estimate $f_i$, we first need to estimate the unknown parameters involved in the covariance 
function $k_{i}(\cdot,\cdot)$. A function family such as a squared exponential kernel and Mat\'{e}rn class kernel can
be applied \citep[see e.g.][]{Shi2011Gaussian}. This paper takes a combination of a square exponential kernel and a non-stationary linear kernel,
\begin{align}\label{kf}
k_{i}(\vesub{x}{ij},\vesub{x}{il})
&=k(\vesub{x}{ij},\vesub{x}{il};\vesub{\beta}{i})  \notag \\
&=v_{i}\exp(-\frac{1}{2}\sum_{q=1}^{p}w_{i,q}(x_{ij,q}-x_{il,q})^{2})+
\sum_{q=1}^{p}a_{i,q}x_{ij,q}x_{il,q},
\end{align}
where $\vesub{\beta}{i}=(v_{i},w_{i,1},\dots,w_{i,p},a_{i,1},\dots,a_{i,p}), i=1,...,m,$ is a vector of hyper-parameters.

From model (\ref{EMTD}), we have a joint likelihood function
\begin{align}\label{jll}
L(\ve{Y}|\nu,\sigma^{2},\vesub{\beta}{1},\dots,\vesub{\beta}{m})=
\prod_{i=1}^{m}L_{i}(\vesub{y}{i}|\nu,\sigma^{2},\vesub{\beta}{i}),
\end{align}
where $\ve{Y}=(\vess{y}{1}{\top},\dots,\vess{y}{m}{\top})^{\top}$, and $L_{i}$ is the likelihood function based on the data observed from the $i$-th subject.
Maximizing (\ref{jll}) over $\ve\theta=(\vesub{\beta}{1},...,\vesub{\beta}{m},\sigma^2)^\top$ and $\nu$,  \cite{Wang2017extend} obtain the MLEs of $\ve\theta$ and $\nu$.
However, $\nu$ is usually over-estimated using the likelihood method as we discussed  in the previous section, and then, it may lose the robustness. 

Instead of using MLE, this paper applies Bayesian method to estimate the unknown parameters.	
For $\sigma^{2}$ and $\vesub{\beta}{i}$,  we take the following hyper-prior distributions,
\begin{equation}\label{pd}
\begin{aligned}
&w_{i,q}^{-1} \stackrel{ind}{\sim} G(\alpha_{1},\mu_{1}),~\log a_{i,q} \stackrel{ind}{\sim} N(\mu_{2},\sigma_{2}^{2}),\\
&\log v_{i} \stackrel{ind}{\sim} N(\mu_{3},\sigma_{3}^{2}),~ \log \sigma^{2} \sim N(\mu_{4},\sigma_{4}^{2}),
\end{aligned}
\end{equation}
where $G(\alpha_{1},\mu_{1})$ stands for a gamma distribution  with parameters $\alpha_{1}$ and $\mu_{1}$.
We also specify a prior for $\nu$: 
\begin{equation}
\pi(\nu)d\nu \propto \nu^{-2}d\nu, (\nu \geq 1).\label{prior-nu}
\end{equation}

By combining the likelihood function (\ref{jll}) and the prior densities, we have
a joint posterior likelihood function of the parameters, 
\begin{align}
\pi(\nu,\sigma^{2},\vesub{\beta}{1},\dots,\vesub{\beta}{m}|\ve{Y}) \propto
\prod_{i=1}^{m}L_{i}(\vesub{y}{i}|\nu,\sigma^{2},\vesub{\beta}{i}) \pi(\nu)\pi(\sigma^{2})\pi(\vesub{\beta}{1})\cdot\cdot\cdot\pi(\vesub{\beta}{m}),\nonumber
\end{align}
where $\pi(\nu)$, $\pi(\sigma^2)$ and $\pi(\vesub{\beta}{i})$ are the density functions defined in (\ref{pd}) and (\ref{prior-nu}) for the priors of $\nu$, $\sigma^2$ and $\vesub{\beta}{i}$, respectively.
Let
$l(\nu,\ve{\theta};\ve{Y})=\log (\pi(\nu,\sigma^{2},\vesub{\beta}{1},\dots,\vesub{\beta}{m}|\ve{Y}))$.
The parameters are estimated by maximizing $l(\nu,\ve{\theta};\ve{Y})$ over $\ve{\theta}$ and $\nu$. 

Note that the kernel function (\ref{kf}) includes $m(2p+1)$ hyper-parameters. With large $p$, there are too many parameters.
This paper develops a spike and slab variable selection method \citep{Ishwaran2005Spike, Yen2011A} for model (\ref{prmodel}).
Applying 
the spike and slab priors in (\ref{pd}), we define new hyper-prior distributions as follows. 
\begin{equation}\label{pd-new}
\begin{aligned}
&w_{i,q}|\gamma_{i,q} \stackrel{ind}{\sim} \gamma_{i,q}IG(\alpha_{1},\mu_{1}^{-1})+(1-\gamma_{i,q})I(w_{i,q}=0),\\
&\gamma_{i,q}|\kappa \stackrel{ind}{\sim} Bernoulli(\kappa),~q=1,\dots,p,\\
&a_{i,q}|\delta_{i,q} \stackrel{ind}{\sim} \delta_{i,q}LogN(\mu_{2},\sigma_{2}^{2})+(1-\delta_{i,q})I(a_{i,q}=0),\\
&\delta_{i,q}|\kappa \stackrel{ind}{\sim} Bernoulli(\kappa),~q=1,\dots,p,
\end{aligned}
\end{equation}
where $Bernoulli$ and $LogN$ stand for a Bernoulli distribution and a log-normal distribution, respectively.
Maximizing the posterior likelihood with the spike and slab priors (\ref{pd-new}),
we force some parameters involved in \eqref{kf} to be zero (resulting in a simpler covariance structure), and in the meantime, we  
obtain the estimates of the selected (non-zero) parameters. 

\subsection{Prediction and consistency}

At a new observed point $\ve{u}$, we show that
\begin{equation}
f_{i}(\ve{u})|\vesub{D}{n} \sim EMTD(n/2+\nu,n/2+\nu-1,\mu_{in}^{*},\sigma_{in}^{*}),
\end{equation}
where $\mu_{in}^{*}=E(f_{i}(\ve{u})|\vesub{D}{n})=\vess{k}{iu}{\top}\tvess{\Sigma}{in}{-1} \vesub{y}{i}$,  $\sigma_{in}^{*}=Var(f_{i}(\ve{u})|\vesub{D}{n})=s_{0i}(k_{i}(\ve{u},\ve{u})-\vess{k}{iu}{\top}\tvess{\Sigma}{in}{-1}\vesub{k}{iu}),$
$s_{0i}=E(r_{i}|\vesub{D}{n})=(\vess{y}{i}{\top}\widetilde{\boldsymbol{\Sigma}}_{in}^{-1}\vesub{y}{i}+2(\nu-1))/(n+2(\nu-1))$,
$\vesub{k}{iu}=(k_{i}(\vesub{x}{i1},\ve{u}),\dots,k_{i}(\vesub{x}{in},\ve{u}))^{\top}$, and $\tvesub{\Sigma}{in}=\sigma^{2}\vesub{I}{n}+\vesub{K}{in}$.
 By replacing the unknown parameters in $\mu_{in}^{*}$ and $\sigma_{in}^{*}$ with their estimates, 
 it gives a prediction of ${y}_{i}(\ve{u})$, denoted by $\hat{y}_{i}(\ve{u})=\mu_{in}^{*}$, and an estimate of its variance.


Let $P(\vesub{y}{i}|f_{i},\vesub{X}{i})$ be density function of $\vesub{y}{i}$ with function $f_{i}$ under eTPR,
and $P_0(\vesub{y}{i}|\vesub{X}{i})=P(\vesub{y}{i}|f_{0i},\vesub{X}{i})$ where $f_{0i}$ is the true underlying function of $f_{i}$.
Let $P_{bs}(\vesub{y}{i}|\vesub{X}{i})$ represent a Bayesian TP prediction strategy with
$P_{bs}(\mbox{{\boldmath ${y}$}$_{i}$}|
\mbox{{\boldmath
${X}$}$_{i}$})=\int_{{\mathcal{F}}}P(\mbox{{\boldmath
${y}$}$_{i}$}|f,\mbox{{\boldmath
${X}$}$_{i}$})dp_{\beta_i}(f)$,
where  $p_{\beta_i}(f)$ is an ETP prior with the kernel function $k_i$ (a measure of random
process $f$ on space ${\mathcal{F}}=\{f(\cdot):\mathcal{X}\rightarrow R\}$ deduced by kernel function $k_i$).
Then we have the following theorem. 

\newtheorem{theorem}{Theorem}
\begin{theorem}
Suppose $\vesub{y}{i}=(y_{i1},\dots,y_{in})$ are generated from the eTPR model (\ref{EMTD}) with  the covariance kernel function $k_{i}$. Let $k_{i}$ be bounded and thrice differentiable in parameter $\vesub{\beta}{i}$. Then we have\\
(i) The MAP estimator $\hvesub{\beta}{i}$ 
is a consistent estimator of $\vesub{\beta}{i}$. \\
(ii) Prediction strategy has information consistency,
\begin{equation}\label{incon}
\frac{1}{n}E_{\vesub{X}{i}}(D[P_0(\vesub{y}{i}|\vesub{X}{i}) \|P_{bs}(\vesub{y}{i}|\vesub{X}{i})])\longrightarrow0, as \ n\rightarrow \infty,
\end{equation}
where the expectation is taken over the distribution of $\vesub{X}{i}$, and $D[P_{1}\|P_{2}]=\int (\log P_{1}-\log P_{2})d P_{1}$ denotes the Kullback-Leibler divergence.
\end{theorem}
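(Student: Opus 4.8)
The plan is to prove the two assertions separately: (i) by reducing the MAP to the eTPR maximum likelihood estimator and transferring its consistency, and (ii) by a Seeger-type PAC--Bayes bound on the Kullback--Leibler divergence combined with a log-determinant estimate for the kernel (\ref{kf}).

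For (i), write the log-posterior as $l(\nu,\ve\theta;\ve Y)=\sum_{i=1}^{m}\log L_i(\vesub{y}{i}|\nu,\sigma^2,\vesub{\beta}{i})+\log\pi(\nu)+\log\pi(\sigma^2)+\sum_{i=1}^{m}\log\pi(\vesub{\beta}{i})$. Each prior density in (\ref{pd}) is strictly positive, bounded and continuously differentiable on the interior of its support, so on any fixed compact neighbourhood of the true parameter the log-prior and its gradient are $O(1)$, whereas under the eTPR model with $n$ grid points per subject the score and observed information assembled from $\sum_i\log L_i$ grow with $n$. Hence the stationarity equation $\nabla_{\vesub{\beta}{i}}l=0$ is, up to an additive $O(1)$ perturbation, the likelihood score equation, and I would first establish the usual regularity facts for the latter — a locally uniform law of large numbers for the normalised log-likelihood and its first two derivatives, and nonsingularity of the limiting information — using boundedness and thrice differentiability of $k_i$ to bound the derivatives of $\log L_i$ (these are exactly the ingredients behind the eTPR MLE consistency of \cite{Wang2017extend}). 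A Cram\'er-type argument then produces, with probability tending to one, a root $\hvesub{\beta}{i}$ of $\nabla_{\vesub{\beta}{i}}l=0$ inside any shrinking ball about $\vesub{\beta}{i}$, giving $\hvesub{\beta}{i}\stackrel{p}{\longrightarrow}\vesub{\beta}{i}$.

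For (ii), the plan is to control $D[P_0\|P_{bs}]$ by an information-gain term plus a fixed complexity term. Since the EMTD is the scale mixture $\vesub{y}{i}|\vesub{f}{i},r_i\sim N(\vesub{f}{i},r_i\sigma^2\vesub{I}{n})$, $\vesub{f}{i}|r_i\sim N(\vesub{h}{in},r_i\vesub{K}{in})$, $r_i\sim IG(\nu,\nu-1)$, I would condition on $r_i$ and apply the Gaussian-process information-consistency bound,
\begin{equation*}
D[P_0(\vesub{y}{i}|\vesub{X}{i})\|P_{bs}(\vesub{y}{i}|\vesub{X}{i})]\le \frac{1}{2\sigma^2}\|f_{0i}\|^2_{\mathcal{H}_{k_i}}+\frac12\log\bigl|\vesub{I}{n}+\sigma^{-2}\vesub{K}{in}\bigr|,
\end{equation*}
then integrate over $r_i$, which reproduces the same right-hand side plus a term from the inverse-gamma mixing that is $O(\log n)$ together with a $\nu$-dependent constant (as in \cite{Wang2017extend}); dividing by $n$ kills these. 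It then remains to show $n^{-1}E_{\vesub{X}{i}}\log|\vesub{I}{n}+\sigma^{-2}\vesub{K}{in}|\to0$. Writing this as $n^{-1}\sum_{j=1}^{n}\log(1+\sigma^{-2}\lambda_{ij})$ with $\lambda_{ij}$ the eigenvalues of $\vesub{K}{in}$, I would split the kernel (\ref{kf}): the linear component has rank at most $p$ with bounded eigenvalues (since $k_i$ is bounded), and the squared-exponential component has Mercer eigenvalues decaying at least exponentially, so in either case $\log|\vesub{I}{n}+\sigma^{-2}\vesub{K}{in}|$ is polylogarithmic in $n$, hence $o(n)$; because this bound is uniform in the design it survives taking $E_{\vesub{X}{i}}$. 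Combining the pieces yields (\ref{incon}).

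The main obstacle in (i) is not the perturbation step but the underlying regularity: establishing the locally uniform convergence of the normalised eTPR log-likelihood and the nonsingularity of its limiting information under a dependent, non-Gaussian design — this is where boundedness and thrice differentiability of $k_i$ enter. In (ii) the crux is the two model-specific moves: transferring the Gaussian bound through the inverse-gamma scale mixture while keeping the extra term $o(n)$ uniformly, and verifying the $o(n)$ growth of $\log|\vesub{I}{n}+\sigma^{-2}\vesub{K}{in}|$ for the \emph{composite} kernel (\ref{kf}), whose non-stationary linear part must be bounded separately from its smooth stationary part.
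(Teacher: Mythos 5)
Your overall strategy coincides with the paper's: part (i) is proved there by exactly your reduction --- the log-prior and its derivatives are $O(1)$ on a compact neighbourhood of the truth while the score and information grow like $n$, so the MAP stationarity equation is an asymptotically negligible perturbation of the likelihood score equation (this is the content of the paper's Lemma 1); and part (ii) is handled by the Seeger-type RKHS-norm plus log-determinant bound carried through the inverse-gamma scale mixture, which the paper simply defers to Wang et al.\ (2017). Your sketch of (ii) is in fact more explicit than the paper's one-line citation, and the eigenvalue splitting of the composite kernel (rank-$p$ linear part plus rapidly decaying squared-exponential part) is a reasonable way to verify $\log|\vesub{I}{n}+\sigma^{-2}\vesub{K}{in}|=o(n)$.

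The genuine gap is in (i), at precisely the point you flag as ``the main obstacle'' and then leave unexecuted. A ``locally uniform law of large numbers for the normalised log-likelihood and its first two derivatives'' is not available off the shelf here: within a subject the observations $y_{i1},\dots,y_{in}$ are a single dependent, non-Gaussian draw from an extended $t$-process, so the log-likelihood does not decompose into i.i.d.\ (or even independent) summands, and a Cram\'er-type argument cannot be run on it directly. The paper's proof supplies the missing mechanism: it factorises the joint density sequentially, $p_k(\theta)=p(y^{k};\theta)/p(y^{k-1};\theta)$, uses the closed-form conditional law of $y_k$ given $y^{k-1}$ under the ETP (conditional mean $m_k(\vesub{\beta}{i})$, conditional variance $v_k(\vesub{\beta}{i})$, degrees of freedom $\nu+(k-1)/2$) to write the score as a sum of martingale differences $U_k$ with respect to the filtration $\mathscr{F}_{k-1}$, computes the conditional information $i_k(\theta_0)=E[U_k^2\,|\,\mathscr{F}_{k-1}]$ explicitly, proves $0<m'\le i_k(\theta_0)\le M'$ so that $K(n)=I_n(\theta_0)=O(n)$, and then invokes martingale limit theorems (Rao's conditions (C1)--(C4), via Hall and Heyde) to get a consistent root. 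The lower bound on $i_k(\theta_0)$ is the nonsingularity of the information that you acknowledge needing but do not establish; without the conditional (martingale) decomposition it is not clear how your plan would obtain either the law of large numbers or that lower bound. To complete your proof you would need to either adopt this sequential-conditioning device or supply an alternative uniform limit theory valid for the dependent ETP likelihood.
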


The proof is given in Supplementary Material.

For GPR and eTPR, \cite{Seeger2008Information} and  \cite{Wang2017extend}  studied information consistency of their proposed methods, respectively.
Theorem 1 shows information consistency under the proposed Bayesian estimation for eTPR (BeTPR).

\section{Numerical study}

\subsection{Simulation studies}

Predictions from BeTPR  are compared with those from
GPR and eTPR by simulation studies.  For priors of the parameters in (\ref{pd}), we take $\log a_{i,q} \sim N(-3,3^{2})$, $\log \sigma^{2} \sim N(-3,3^{2})$, $\log v_{i,1} \sim N(-3,1)$ and $w_{i,q}^{-1} \sim \Gamma(2,0.5)$. The parameter $\kappa$ for Bernoulli distribution in (\ref{pd-new}) takes 0.84.  More discussion on choosing or estimating $\kappa$ can be founded in \cite{Yen2011A}. All simulation results are based on 500 replications.


\begin{table}[h!]
	\centering
	\tabcolsep=2pt\fontsize{8}{12}\selectfont
	\caption{Mean squared errors  and the standard deviation (in parentheses) from the GPR, eTPR and BeTPR methods with $m=2$ and 5.}
\vskip 10pt
	\begin{tabular}{cccccccc}
		\hline
		\multirow{2}{*}{Case} & \multicolumn{3}{c}{$m=2$} & &  \multicolumn{3}{c}{$m=5$}  \\
		\cline{2-4} \cline{6-8}
		& GPR & eTPR & BeTPR &  & GPR & eTPR & BeTPR \\
		\hline
		(1) & 0.141(0.399) & 0.127(0.323)   & 0.081(0.279)  & & 0.221(0.426)  & 0.171(0.399) &  0.104(0.310)\\
		(2) & 0.118(0.358) & 0.107(0.301)   & 0.058(0.150) & & 0.123(0.204) &  0.097(0.162)  & 0.048(0.066)\\
		(3) & 0.175(0.410) & 0.158(0.386)  & 0.092(0.278)  & & 0.229(0.447) &  0.198(0.452)  & 0.098(0.292)\\
		(4) & 0.187(0.493) & 0.152(0.357)  & 0.086(0.210)  & & 0.203(0.350) &  0.160(0.343)  & 0.091(0.163)\\
		\hline
	\end{tabular}
	\label{tab1}
\end{table}

\begin{table}[h!]
	\centering
	\tabcolsep=2pt\fontsize{8}{12}\selectfont
	\caption{Estimates of $\nu$ and their standard deviation (in parentheses) from the eTPR and BeRPR methods with $m=2$ and 5.}
\vskip 10pt
	\begin{tabular}{cccccc}
		\hline
		\multirow{2}{*}{Case} & \multicolumn{2}{c}{$m=2$}  & &  \multicolumn{2}{c}{$m=5$}  \\
		\cline{2-3} \cline{5-6}
		& eTPR & BeTPR  & & eTPR & BeTPR \\
		\hline
		(1) & 2.587(0.769) & 1.148(0.109)  & & 2.309(0.861)  & 1.180(0.310)  \\
		(2) & 2.648(0.720) &  1.149(0.096) & & 2.545(0.748) &  1.140(0.103)   \\
		(3) & 2.620(0.738) &  1.186(0.152)  & & 2.260(0.862) &  1.209(0.253)  \\
		(4) & 2.571(0.770) &  1.178(0.134)  & & 2.229(0.884) &  1.202(0.263)  \\
		\hline
	\end{tabular}
	\label{tab2}
\end{table}

Simulated data with $p=1$ are generated from the following 4 cases: \\
(1) $f_{i} \sim GP(0,k_{i})$, $\varepsilon_{i} \sim N(0,\sigma^{2})$, $\sigma^{2}=0.05$, and $\vesub{\beta}{i}=\vesub{\beta}{0}$; \\
(2) $f_{i} \sim GP(0,k_{i})$, $\varepsilon_{i} \sim \sigma t_{2}$, $\sigma^{2}=0.05$, and $\vesub{\beta}{i}=\vesub{\beta}{0}$; \\
(3) $f_{i} \sim ETP(2,2,0,k_{i})$, $\varepsilon_{i} \sim ETP(2,2,0,\widetilde{k})$, $\sigma^{2}=0.05$, and $\vesub{\beta}{i}=\vesub{\beta}{0}$;  \\
(4) $f_{i}$ and $\varepsilon_{i}$ have a joint ETP with $\sigma^{2}=0.05$ and $\vesub{\beta}{i}=\vesub{\beta}{0}$; \\
where $\vesub{\beta}{i}=(v_{i},w_{i,1},a_{i,1})$ are hyper-parameters in $k_{i}$, $i=1,\dots,m$, and $\vesub{\beta}{0}=(0.025,2,0.025)$. For each covariates, $N = 50$ points are generated evenly spaced in $[0,3]$, and $n=10$ points are randomly selected as training data and the remaining as test data. Besides, to study robustness, in Cases (1), (3) and (4), one sample is randomly selected from the training data and is added with an extra  error generated from $t_{2}$ (t-distribution with DF of 2). Table \ref{tab1} presents mean squared errors (MSE) between the test data and the prediction from GPR, eTPR and BeTPR and the standard deviation of the prediction, where $m=2$ and 5.
It shows that BeTPR has the smallest MSEs, while eTPR does perform better than GPR which is consistent with the findings in  \cite{Wang2017extend}.
Table \ref{tab2} shows the estimates of $\nu$ from eTPR and BeTPR. We see that BeTPR has much smaller estimates of $\nu$ than eTPR, which indicates that
BeTPR performs more robust than eTPR.


\begin{table}[h!]
	\centering
\tabcolsep=4pt\fontsize{8}{12}\selectfont
	\caption{Mean squared errors  and the standard deviation (in parentheses) from the  eTPR, BeTPR and BeTPR(Variable Selection) methods with $p=3$, $m=2$ and 5.}
\vskip 10pt
	\begin{tabular}{ccccc}
		\hline
		$m$ & Case & eTPR & BeTPR & BeTPR(VS)  \\
		\hline
		2 & (5)  & 0.103(0.102) & 0.073(0.055)  & 0.065(0.048) \\
		& (6)  & 0.176(0.175)   & 0.163(0.181) & 0.136(0.148) \\
		\cline{2-5}
		5 & (5)   & 0.111(0.092) & 0.069(0.045) &  0.068(0.045) \\
		& (6)  &  0.160(0.105)  & 0.113(0.081) & 0.109(0.069) \\
		\hline
	\end{tabular}
	\label{tab5}
\end{table}


We also investigate performance of variable selection of the BeTPR method (BeTPR(VS)) by simulation studies with $p=3$.
Data are generated from models:\\
(5) $f_{i} \sim GP(0,k_{i})$, $\varepsilon_{i} \sim N(0,\sigma^{2})$, $\sigma^{2}=0.05$, and $\vesub{\beta}{i}=(0.5,1,0,0,0.5,0,0)$; \\
(6) $f_{i} \sim ETP(2,2,0,k_{i})$, $\varepsilon_{i} \sim ETP(2,2,0,\widetilde{k})$, $\sigma^{2}=0.05$, and $\vesub{\beta}{i}=$ $(0.5,1,0,$ $0,0.5,0,0)$. \\
As before,  the first covariate takes $N = 50$ points which are evenly spaced in $[5,10]$;
and for the other two covariates,  they are generated from $N(0,0.1)$.
Simulation results show that the mean accuracies of variable selection are $91.9\%$ and $94.2\%$ for the square exponential kernel and non-stationary kernel, respectively.
Table \ref{tab5} presents prediction results from eTPR, BeTPR and BeTPR(VS).
We find that the BeTPR(VS) has the smallest prediction errors, which shows that the Bayesian method including variable selection can improve the performance further.

\subsection{Real examples}

\begin{table}[h!]
	\centering
\tabcolsep=4pt\fontsize{8}{12}\selectfont
	\caption{Prediction errors and the standard deviation (in parentheses) from the GPR, eTPR and BeTPR methods for an executive function research data and market penetration data.}
	\begin{tabular}{ccccc}
		\hline
		Data & $m$ & GPR  & eTPR & BeTPR  \\
		\hline
		DMS  & 2 & 0.271(0.048)   & 0.239(0.033)   & 0.229(0.027) \\
		SWM  & 2 & 0.082(0.024)  &  0.068(0.012)  &  0.065(0.012)  \\
		TD-Australia   & 2 & 0.057(0.041)   & 0.053(0.053)   & 0.038(0.034)  \\
		TD-Asia Pacific   & 3 & 0.010(0.018) &  0.004(0.006) &  0.003(0.003)  \\
		WM-Australia   & 2 & 0.081(0.083)  &  0.040(0.040) &   0.024(0.022)  \\
		WM-Asia Pacific   & 9 & 0.092(0.039)  & 0.078(0.036)  &  0.050(0.024)  \\
		\hline
	\end{tabular}
    \label{tab6}
\end{table}

The BeTPR method is applied to two datasets: an executive function research data and market penetration of new product data.
The executive function research data comes from the study in children with Hemiplegic Cerebral Palsy. The data set consists of 84 girls and 57 boys from primary and secondary schools, which were subdivided into two groups ($m=2$): the action video game players group (AVGPs)($56\%$) and the non action video game players group (NAVGPs)($44\%$). In this paper,  we select two measurement indices: mean token search preparation time (SWM) and  mean correct latency (DMS); the details can be found in \cite{Xu2015Automatic} and \cite{Wang2017extend}. The market  data contains market penetrations of 760 categories drawn from 21 new products and 70 countries; see the details in \cite{Sood2009Functional}. In this paper, we take penetration data of Tumble Drier~(TD) and Washing Machine~(WM) from  1977 to 2015 in two regions: Australia and Asia Pacific.
The countries with  positive penetration in the beginning year of 1977 (non null or non zero) are selected, such that $m$ for TD in these two regions are 2 and 3, and those for WM are
2 and 9 respectively.

To measure the performance, we randomly select $60\%$ observations as the training data and the remaining as the test data.
Three methods are applied to fit the training data and to predict the test data. This procedure is repeated 500 times. Table \ref{tab6} presents mean prediction errors from GPR, eTPR and BeTPR. As we expected, BeTPR has the best performance, especially for market penetration data which include many outliers as the nature of such data. This shows that BeTPR provides a robust method. 

\section{Conclusions}

This paper uses a Bayesian method to  estimate the parameters involved in  the eTPR model.
Compared with the MLE method, the proposed method can avoid an over-estimation of the DF $\nu$, and thus provide a stable robust method in the presence of outliers.
 Statistical properties, including consistency of the parameter estimation and information consistency  are obtained.
This paper assumes that prior of the unknown function and the error term have a joint ETP, which is an unnatural way to define a process model \citep[see discussions in][]{Wang18}. A better way is to use an independent processes model, i.e. the prior of the unknown function and the error terms are independent.  But this model makes
estimation procedure more complicated because of the involvement of intractable multidimensional integrations. We leave the issue for  future research.


\bibliographystyle{elsarticle-harv}
\bibliography{bib}

\end{document}